\newtheorem{lemma}{Lemma}[section]
\newtheorem{theorem}[lemma]{Theorem}
\newtheorem{definition}[lemma]{Definition}
\newtheorem{remark}[lemma]{Remark}
\newtheorem{example}[lemma]{Example}
\begin{document}\baselineskip 0.56cm


\title{A parallel algorithm for minimum weight set cover with small neighborhood property}

\author{\footnotesize Yingli Ran$^1$\quad Yaoyao Zhang$^2$\quad Zhao Zhang$^1$\thanks{Corresponding author: Zhao Zhang, hxhzz@sina.com}\\
	{\it\small $^1$ College of Mathematics and Computer Science, Zhejiang Normal University}\\
    {\it\small Jinhua, Zhejiang, 321004, China}\\
    {\it\small $^2$ College of Mathematics and System Sciences, Xinjiang University}\\
    {\it\small Urumqi, Xinjiang, 830046, China}}
\date{}
\maketitle
\begin{abstract}
{\rm This paper studies the minimum weight set cover (MinWSC) problem with a {\em small neighborhood cover} (SNC) property proposed by Agarwal {\it et al.} in \cite{Agarwal.}. A parallel algorithm for MinWSC with $\tau$-SNC property is presented, obtaining approximation ratio $\tau(1+3\varepsilon)$ in $O(L\log_{1+\varepsilon}\frac{n^3}{\varepsilon^2}+ 4\tau^{3}2^\tau L^2\log n)$ rounds, where $0< \varepsilon <\frac{1}{2}$ is a constant, $n$ is the number of elements, and $L$ is a parameter related to SNC property. Our results not only improve the approximation ratio obtained in \cite{Agarwal.}, but also answer two questions proposed in \cite{Agarwal.}.}



\vskip 0.2cm\noindent {\bf Keyword}$\colon$ set cover; small neighborhood covers; approximation algorithm; parallel algorithm; primal-dual.

\end{abstract}

\section{Introduction}\label{sec1}

In this paper, we study parallel approximation algorithm for the {\em minimum weight set cover} (MinWSC) problem with a {\em small neighborhood cover} (SNC) property.

MinWSC is a classic combinatorial optimization problem. Tight approximation ratios have long been known, including $(\ln\Delta+1)$-approximation \cite{Chvatal} and $f$-approximation \cite{Bar-Yehuda and Even}, where $\Delta$ is the size of a maximum set, $f$ is the maximum frequency of elements (that is, the number of sets containing a common element).

With the fast development in computer architecture and the increasing number of CPU cores, designing efficient parallel algorithms has emerged as an active research area in recent years. There are many parallel algorithms for MinWSC \cite{Rajagopalan and Vazirani,Khuller.,Blelloch,Berger}. In particular, Khuller {\it et al.} \cite{Khuller.} gave a parallel algorithm with approximation ratio $f/(1-\varepsilon)$ where $0<\varepsilon< 1$ is a constant. Note that $f$ might be $\Omega(n)$ in a worst general case. For the problem with special structural properties, it might be possible to get a better approximation ratio. For example, in \cite{Agarwal.}, Agarwal {\it et al.} proposed a structural property called {\em small neighborhood cover} (SNC). A lot of important MinWSC problems possess SNC property, such as the {\em vertex cover problem} (VC), {\em interval cover problem} (IC), {\em tree cover problem} (TC), {\em interval hitting problem}, {\em priority interval cover problem}, {\em bag interval cover problem}, etc. A parallel algorithm for MinWSC with $\tau$-SNC property was first studied in \cite{Agarwal.,Agarwal1.}, where \cite{Agarwal1.} is the preliminary version of \cite{Agarwal.}. \cite{Agarwal.} obtained approximation ratio at most $1+(2+\varepsilon)\tau^2$ in $O(L^2+(1/\varepsilon)L\log m)$ rounds, where $m$ is the number of sets, and $L$ is the depth of the $\tau$-SNC decomposition. Note that in the above mentioned problems, $\tau$ is much smaller than $f$.

In the conclusion part of \cite{Agarwal.}, three questions were proposed. In this paper, we give positive answers to two of them, and present an improved parallel algorithm.

\subsection{Related work}\label{sec01}

For MinWSC, Chv$\acute{a}$tal \cite{Chvatal} gave a greedy algorithm achieving approximation ratio $H(\Delta)$, where $\Delta$ is the size of a maximum set and $H(\Delta)=\sum_{i=1}^{\Delta}\frac{1}{i}$ is the $\Delta$-th Harmonic number (note that $\ln\Delta\leq H(\Delta)\leq\ln\Delta+1$). This ratio is tight under the assumption $P\neq NP$ \cite{Dinur, Feige}. In \cite{Bar-Yehuda and Even}, Bar-Yehuda and Even used prime-dual schema to obtain an approximation ratio $f$, where $f$ is the maximum number of sets containing a common element. This ratio is tight under the Unique Games Conjecture \cite{Kho}. Note that these are all sequential algorithms.

Considering parallel algorithms for MinWSC, Berger {\it et al.} \cite{Berger} gave a parallel algorithm with approximation ratio $(1+\varepsilon)H(\Delta)$ in $O(\log^2n\log m\log^2(nm)/\varepsilon^6)$ rounds, where $n$ is the number of elements and $m$ is the number of sets. In \cite{Rajagopalan and Vazirani}, using a primal-dual schema, Rajagopalan and Vazirani gave a parallel algorithm with improved number of rounds $O(\log n\log (nm)\log(nm\Delta))$ and a weaker approximation ratio $2(1+\varepsilon)H(\Delta)$. In \cite{Blelloch}, by proposing a concept called nearly independent set, Blelloch {\it et al.} were able to further improve the approximation ratio to $(1+\varepsilon)H(\Delta)$ in $O(\log^3 M)$ rounds, where $M$ is the sum of sizes of all sets in $\mathcal{S}$. These are parallel algorithms for MinWSC achieving approximation ratio measured in terms of $\Delta$. In \cite{Khuller.}, by a primal-dual method, Khuller {\it et al.} presented a parallel algorithm for MinWSC with approximation ratio $f/(1-\varepsilon)$ in $O(f\log^2 m\log(\frac{1}{\varepsilon}))$ rounds. For the partial version of MinWSC, the goal of which is to cover not all elements, but at least some percentage of the elements, Ran {\it et al.} \cite{Ran} presented a parallel algorithm with approximation ratio at most $\frac{f}{1-2\varepsilon}$ in $O(\frac{1}{\varepsilon}\log \frac{mn}{\varepsilon})$ rounds, where $0<\varepsilon<\frac{1}{2}$ is a constant.

In \cite{Agarwal.}, Agarwal {\it et al.} proposed a property called $\tau$-SNC, which applies to many problems including VC, TC, IC and some other graph structural covering problems. Using primal-dual schema, they presented a parallel algorithm with approximation ratio $1+(2+\varepsilon)\tau^2$ in $O(L^2\log n+(1/\varepsilon)L\log n)$ rounds, where $L$ is the depth of $\tau$-SNC decomposition. They also gave a distributed algorithm for MinWSC with $\tau$-SNC property and obtained approximation ratio $\tau$ in $O(L\log n+\log^2 n)$ communication rounds.

The parallel algorithm in \cite{Agarwal.} consists of a forward phase and a deletion phase. In the conclusion part, three questions were proposed:

$Q_{\uppercase\expandafter{\romannumeral1}}\colon$ In the forward phase, can one construct ($1-\varepsilon$)-maximal solutions via a procedure having running time independent of $f$?

$Q_{\uppercase\expandafter{\romannumeral2}}\colon$ In the deletion phase, the algorithm produces a solution satisfying the primal slackness property with parameter $\tau^2$. Can this be improved to $\tau$?

$Q_{\uppercase\expandafter{\romannumeral3}}\colon$ The deletion phase leads to $L^2$ iterations. Can this be improved to $L$?

\subsection{Our Contributions}

In this paper, using an improved primal-dual method, we give a parallel algorithm for MinWSC with $\tau$-SNC property, achieving approximation ratio $\tau(1+3\varepsilon)$ in $O(L\log_{1+\varepsilon}\frac{n^3}{\varepsilon^2}+L^2\log n)$ rounds where $0< \varepsilon <\frac{1}{2}$ is a constant. This work not only improves the approximation ratio in \cite{Agarwal.}, but also answers two of the three questions in \cite{Agarwal.} positively.

In the forward phase of \cite{Agarwal.}, the authors used a parallel primal-dual algorithm to obtain a $\lambda$-maximal solution with $\lambda<1/2$. In our algorithm, we use a different idea to increase dual variables in a geometric series and obtain a $(1-\varepsilon)$-maximal solution with the same number of rounds as that of \cite{Agarwal.}. This gives a positive answer to $Q_{\uppercase\expandafter{\romannumeral1}}$.

In the deletion phase of \cite{Agarwal.}, the authors dealt the elements from layer $L$ down to layer 1. In layer $k$, they tried to find maximal independent sets to obtain a solution satisfying the primal slackness property with parameter $\tau^2$ in  $O((L-k+1)\log n)$ rounds. In this paper, we managed to decrease the primal slackness parameter $\tau^2$ down to $\tau$. To realize such a goal, we proposed a random selection method such that the number of rounds to reduce the parameter from $\tau^2$ down to $\tau$ is bounded by $4\tau^{3}2^\tau \log n$. A crucial trick is: how to guarantee that with a constant probability, the number of sets covering a bad element (that is, an element which is covered more than $\tau$ times by the current collection of sets) can be strictly reduced, while the feasibility (that is, all elements are still covered) is maintained. This leads to a positive answer to $Q_{\uppercase\expandafter{\romannumeral2}}$.

The remaining part of this paper is organized as follows. In Section \ref{sec2}, we introduce some terminologies and definitions used in this paper. In Section \ref{sec3}, we present our parallel algorithm for MinWSC with SNC property and give strict analysis. Section \ref{sec5} concludes the paper and gives further discussions.

\section{Preliminaries}\label{sec2}

In this section, we give some  terminologies and definitions used in this paper.

\begin{definition}[MinWSC]
{\rm Given a weighted set system $\mathcal G=(X,\mathcal S,w)$, where $X$ is a ground set, $\mathcal S\subseteq 2^X$ is a subcollection of subsets of $X$, and $w$ is a weight function on $\mathcal S$, the goal of MinWSC is to find a minimum weight subcollection of $\mathcal S$ to cover all elements, where the set of elements covered by subcollection $\mathcal S'\subseteq \mathcal S$, denoted as $\mathcal U(\mathcal S')$, is $\bigcup_{S\in \mathcal S'}S$, and the weight $w(\mathcal S')=\sum_{S\in \mathcal S'}w(S)$.}
\end{definition}
For any $\mathcal S'\subseteq \mathcal S$, let $\mathcal{N}_{\mathcal S'}(e)=\{S\in \mathcal S'\colon e\in S\}$ be the subcollection of $\mathcal S'$ consisting of sets containing $e$. Denote $\mathcal{R}_{\mathcal S'}(e,e^\prime)=\mathcal{N}_{\mathcal S'}(e)\cap \mathcal{N}_{\mathcal S'}(e')$ the collection of sets containing both $e$ and $e'$. We say that
\begin{equation}\label{eq1220}
\mbox{$e$ and $e'$ are {\em neighbors} in $\mathcal S'$ if $e$ and $e'$ are both in some $S\in \mathcal S'$.}
\end{equation}
In other words, neighbors of $e$ in $\mathcal S'$ constitute the set $\mathcal U(\mathcal N_{\mathcal S'}(e))$. Note that $e$ is a neighbor of itself by this definition.

For an easier understanding of SNC property, let us first consider the interval cover problem, in which a set of points on a line is to be covered by the minimum number of intervals chosen from a given collection of intervals. Note that an interval cover instance can be viewed as a set cover instance by viewing each interval as an element set containing those points contained in this interval (see Fig. \ref{fig0112-1} for an illustration). Observe that in any {\em minimal} interval cover $\mathcal I'$, any point belongs to at most two intervals of $\mathcal I'$. In fact, for any point $p\in X$, let $I_\ell$ and $I_r$ be the intervals in $\mathcal N_{\mathcal I'}(p)$ with the leftmost and the rightmost endpoints, respectively, then $I_\ell\cup I_r$ can cover all points in $\mathcal U(\mathcal N_{\mathcal I'}(p))$. For example, in Fig. \ref{fig0112-1}, point $p_7$ belongs to four intervals $I_3,I_{10},I_{12},I_{16}$. If $I_{10}$ and $I_{12}$ are taken, then all neighbors of $p_7$ (namely $\{p_3,p_4,\ldots,p_{10}\}$) are covered by $I_{10}\cup I_{12}$, and thus $I_3,I_{16}$ are not needed. Such a property is preferred because reducing the frequency of points in the selected subcollection of sets will lead to better approximation ratio.

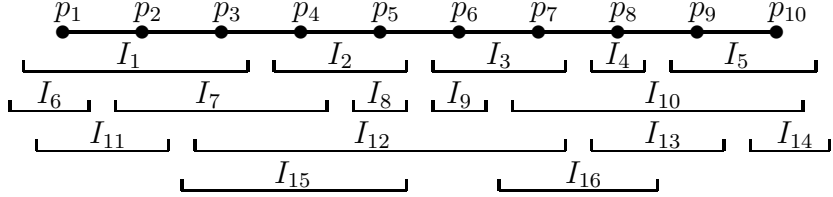
\begin{figure}[htpb]
\begin{center}
\begin{picture}(320,80)
\put(17,66){$p_1$}\put(47,66){$p_2$}\put(77,66){$p_3$}\put(107,66){$p_4$}\put(137,66){$p_5$}
\put(167,66){$p_6$}\put(197,66){$p_7$}\put(227,66){$p_8$}\put(257,66){$p_9$}\put(287,66){$p_{10}$}
\put(20,60){\linethickness{0.3mm}\line(1,0){270}}\multiput(20,60)(30,0){10}{\circle*{5}}

\put(40,48){$I_1$}\put(5,45){\line(1,0){85}}\multiput(5,45)(85,0){2}{\linethickness{0.3mm}\line(0,1){4}}
\put(120,48){$I_2$}\put(100,45){\line(1,0){50}}\multiput(100,45)(50,0){2}{\linethickness{0.3mm}\line(0,1){4}}
\put(180,48){$I_3$}\put(160,45){\line(1,0){50}}\multiput(160,45)(50,0){2}{\linethickness{0.3mm}\line(0,1){4}}
\put(225,48){$I_4$}\put(220,45){\line(1,0){20}}\multiput(220,45)(20,0){2}{\linethickness{0.3mm}\line(0,1){4}}
\put(270,48){$I_5$}\put(250,45){\line(1,0){55}}\multiput(250,45)(55,0){2}{\linethickness{0.3mm}\line(0,1){4}}

\put(10,33){$I_6$}\put(0,30){\line(1,0){30}}\multiput(0,30)(30,0){2}{\linethickness{0.3mm}\line(0,1){4}}
\put(70,33){$I_7$}\put(40,30){\line(1,0){80}}\multiput(40,30)(80,0){2}{\linethickness{0.3mm}\line(0,1){4}}
\put(135,33){$I_8$}\put(130,30){\line(1,0){20}}\multiput(130,30)(20,0){2}{\linethickness{0.3mm}\line(0,1){4}}
\put(165,33){$I_9$}\put(160,30){\line(1,0){20}}\multiput(160,30)(20,0){2}{\linethickness{0.3mm}\line(0,1){4}}
\put(240,33){$I_{10}$}\put(190,30){\line(1,0){110}}\multiput(190,30)(110,0){2}{\linethickness{0.3mm}\line(0,1){4}}

\put(30,18){$I_{11}$}\put(10,15){\line(1,0){50}}\multiput(10,15)(50,0){2}{\linethickness{0.3mm}\line(0,1){4}}
\put(130,18){$I_{12}$}\put(70,15){\line(1,0){140}}\multiput(70,15)(140,0){2}{\linethickness{0.3mm}\line(0,1){4}}
\put(240,18){$I_{13}$}\put(220,15){\line(1,0){50}}\multiput(220,15)(50,0){2}{\linethickness{0.3mm}\line(0,1){4}}
\put(290,18){$I_{14}$}\put(280,15){\line(1,0){30}}\multiput(280,15)(30,0){2}{\linethickness{0.3mm}\line(0,1){4}}

\put(100,3){$I_{15}$}\put(65,0){\line(1,0){85}}\multiput(65,0)(85,0){2}{\linethickness{0.3mm}\line(0,1){4}}
\put(210,3){$I_{16}$}\put(185,0){\line(1,0){60}}\multiput(185,0)(60,0){2}{\linethickness{0.3mm}\line(0,1){4}}
\end{picture}
\caption{An example of interval cover.}\label{fig0112-1}
\end{center}
\end{figure}

Even better in the above example, the leftmost point belongs to only one interval in any minimal solution: among all selected intervals covering this point, the one with the rightmost endpoint covers all its neighbors. Call those points satisfying such a better property as {\em good points}. Note that not all points are good, but such a better property is hereditary, in the sense that any sub-instance has good points. So, we can decompose all points into layers, by first finding out all good points in the original instance, removing them, and then iteratively finding good points in the residual instances.

These observations motivate the definition of $\tau$-SNC property proposed in \cite{Agarwal.}. Because of the above consideration of decomposition in the remaining instance, it is defined in a more general setting: restricted to any element set and any subcollection of sets.

\begin{definition}[$\tau$-collapsible and base group set]
{\rm For any subset $X'\subseteq X$ containing $e$ and $\mathcal S'\subseteq \mathcal N_{\mathcal S}(e)$, the {\em neighborhood of $e$ restricted to $(X', \mathcal{S}')$} is $RN(e,X',\mathcal S')=\mathcal U(\mathcal{N}_{\mathcal S'}(e))\cap X'$. We say that $RN(e,X',\mathcal S')$ is {\em $\tau$-collapsible} if there exists a collection $\mathcal B(e)$ consisting of at most $\tau$ sets from $\mathcal N_{\mathcal S'}(e)$ covering $RN(e,X',\mathcal S')$. Call $\mathcal B(e)$ as a {\em base group set of $e$ restricted to $(X', \mathcal{S}')$}.}
\end{definition}

\begin{example}\label{eg0112-2}
{\rm For the example in Fig. \ref{fig0112-1}, $X=\{p_1,\ldots,p_{10}\}$, $\mathcal S=\{S_1,\ldots,S_{16}\}$, where $S_i$ is the set of points covered by interval $I_i$. Suppose $\mathcal S'=\{S_2,S_3,S_8,S_{12},S_{14},S_{16}\}$ and $X'=\{p_2,p_3,p_5,p_7,p_8\}$. Then, $\mathcal N_{\mathcal S'}(p_7)=\{S_3,S_{12},S_{16}\}$, $RN(p_7,X',\mathcal S')=\{p_3,p_5,p_7,p_8\}$. Note that $RN(p_7,X',\mathcal S')$ is 2-collapsible since $\mathcal B(p_7)=\{S_{12},S_{16}\}$ is a base group set of $p_7$ restricted to $(X',\mathcal S')$.}
\end{example}

\begin{definition}[$\tau$-SNC]
{\rm Given a set system $(X,\mathcal{S})$, for a subset $X'\subseteq X$ and an element $e\in X'$, call $e$ a {\em $\tau$-SNC element in $X'$} if for any $\mathcal S'\subseteq \mathcal{N}_{\mathcal{S}}(e)$, $RN(e,X',\mathcal S')$ is $\tau$-collapsible. $(X,\mathcal{S})$ is said to have {\em $\tau$-SNC property} if for any $X'\subseteq X$, there exists an element $e\in X'$ which is a $\tau$-SNC element in $X'$.}
\end{definition}

The property of $\tau$-SNC is hereditary in the following sense, if $e$ is a $\tau$-SNC element in $X_1$, then for any $X_2\subseteq X_1$ with $e\in X_2$, $e$ is also a $\tau$-SNC element in $X_2$.

\begin{remark}
{\rm Interval cover problem has 1-SNC property. In fact, for any $X'\subseteq X$, the leftmost point $p$ of $X'$ is a 1-SNC element: for any $\mathcal S'\subseteq\mathcal N_{\mathcal S'}(p)$, the interval of $\mathcal S'$ with the rightmost endpoint covers $RN(p,X',\mathcal S')$. }
\end{remark}

For the instance in Fig. \ref{fig0112-1}, if we consider element set $X''=\{p_3,p_5,p_7,p_8\}$, then $p_3$ is a 1-SNC element. For example, for subcollection $\mathcal S''=\{S_1,S_3,S_7,S_{12},S_{16}\}$, interval $I_{12}\in\mathcal N_{\mathcal S''}(p_3)$ covers all points of $RN(p_2,X'',\mathcal S'')=\{p_3,p_5,p_7\}$, and thus a base group set of $p_3$ consists of only one set. As we have shown in Example \ref{eg0112-2} that $RN(p_7,X',\mathcal S')$ is only 2-collapsible. However, $X'$ does have 1-SNC elements, namely $p_2,p_5,p_8$. It should be noticed that empty set is a base group set of $p_2$ restricted to $(X',\mathcal S')$. The size of a base group set is only required to be {\em at most} $\tau$, not exactly $\tau$. Also notice that removing 1-SNC elements of $X'$, then $p_3,p_7$ become 1-SNC elements of the residual set $X'\setminus \{p_2,p_3,p_5\}$.

\begin{definition}[layer decomposition]\label{def1223-1}
{\rm Given a set system $\langle X,\mathcal{S} \rangle$ with SNC property, the {\em layer decomposition} of $(X,\mathcal{S})$ is a decomposition of $X$ into $X=\bigcup_{j=1}^L Z_j$, where $Z_1$ is the set of all $\tau$-SNC elements in $X$, and for $k=2,\dots,L$, $Z_k$ is  the set of all $\tau$-SNC elements in $X\setminus (\bigcup_{j=1}^{k-1} Z_j)$. We call $L$ as the {\em layer depth} of $(X,\mathcal{S})$, and the elements in $Z_j$ are said to have {\em layer level} $j$.}
\end{definition}
The following result was proved in \cite{Agarwal.}.
\begin{lemma}\label{lem1}
Given a set system $(X, \mathcal{S})$ and a constant $\tau$, there exists a procedure which can test whether $(X, \mathcal{S})$ has $\tau$-SNC property, and if it does, output the layer decomposition of $(X, \mathcal{S})$. This procedure takes $L$ iterations and can be implemented in parallel on $n^{O(\tau)}$ machines.
\end{lemma}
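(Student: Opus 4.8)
The plan is to realize the procedure as an iterative ``peeling'' process that mirrors Definition \ref{def1223-1} directly, and to reduce the universally-quantified $\tau$-SNC test to a finite, locally checkable condition. I would maintain the residual element set $X_k=X\setminus\bigcup_{j<k}Z_j$, starting from $X_1=X$; at iteration $k$ I would compute $Z_k$, the set of all elements of $X_k$ that are $\tau$-SNC in $X_k$, declare failure if $Z_k=\emptyset$ while $X_k\neq\emptyset$, and otherwise pass to $X_{k+1}=X_k\setminus Z_k$. The process stops when $X_k$ becomes empty, in which case the computed $Z_1,\dots,Z_L$ are output as the layer decomposition and $L$ is recorded as the layer depth.

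Correctness of the test is where I would lean on the hereditary property stated above. In one direction, if some iteration reaches a nonempty $X_k$ with $Z_k=\emptyset$, then $X_k$ is a subset of $X$ containing no $\tau$-SNC element, so $(X,\mathcal S)$ fails the property by definition. In the other direction, if the peeling exhausts $X$ into layers $Z_1,\dots,Z_L$, then for an arbitrary $X'\subseteq X$ I would take the smallest index $j$ with $X'\cap Z_j\neq\emptyset$ and any $e\in X'\cap Z_j$; since $X'$ avoids $Z_1,\dots,Z_{j-1}$ we have $X'\subseteq X_j$ and $e$ is $\tau$-SNC in $X_j$, so heredity gives that $e$ is $\tau$-SNC in $X'$. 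Thus every subset has a $\tau$-SNC element and the property holds. The iteration count is then exactly $L$ by the very definition of the layer depth.

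The technical heart is the local test: deciding, for a fixed $e$ and a fixed residual set $X_k$, the universal statement ``$RN(e,X_k,\mathcal S')$ is $\tau$-collapsible for every $\mathcal S'\subseteq\mathcal N_{\mathcal S}(e)$,'' which on its face ranges over exponentially many $\mathcal S'$. I would prove that this universal statement is equivalent to the nonexistence of a \emph{size-$(\tau+1)$ witness}, namely $\tau+1$ sets $S_1,\dots,S_{\tau+1}\in\mathcal N_{\mathcal S}(e)$ together with elements $e_1,\dots,e_{\tau+1}\in X_k$ such that each $e_i\in S_i$ is \emph{private}, i.e. $e_i\notin S_j$ for all $j\neq i$. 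The nontrivial direction is to show that any non-$\tau$-collapsible $\mathcal S'$ yields such a witness: taking a minimum-cardinality subcollection of $\mathcal S'$ that covers $\mathcal U(\mathcal S')\cap X_k$, this cover has size at least $\tau+1$ (as $\mathcal N_{\mathcal S'}(e)=\mathcal S'$ and $\tau$ sets do not suffice) and is irredundant, so every one of its sets owns a private element; keeping any $\tau+1$ of these sets produces the witness, since privateness is preserved under passing to a subcollection. The converse is immediate, as a size-$(\tau+1)$ witness is itself a choice of $\mathcal S'$ whose restricted neighborhood cannot be covered by $\tau$ of its sets. I expect this equivalence --- pinning the minimal obstruction size at exactly $\tau+1$ --- to be the main obstacle.

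Finally, the witness characterization delivers the machine bound. For each element $e$ and each $(\tau+1)$-tuple of candidate private elements drawn from $X_k$ (there are $n^{O(\tau)}$ such tuples), a single machine can test, for each coordinate $i$, whether some set in $\mathcal N_{\mathcal S}(e)$ contains $e_i$ but avoids the other $e_j$; the tuple is a witness exactly when all $\tau+1$ existence checks succeed, and $e\in Z_k$ exactly when no tuple is a witness. Assigning machines to $(\text{element},\text{tuple})$ pairs keeps the count at $n^{O(\tau)}$ and each iteration runs in constant parallel depth, so the whole decomposition uses $L$ iterations on $n^{O(\tau)}$ machines. Heredity itself, if one prefers not to quote it, drops out of the same characterization, since shrinking $X_k$ can only destroy witnesses.
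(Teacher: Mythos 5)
Your proposal is correct. Note that the paper does not actually prove this lemma --- it only cites Agarwal \emph{et al.}\ \cite{Agarwal.} --- so there is no in-paper proof to compare against; what you have written is a sound self-contained reconstruction. The peeling loop and the two directions of the correctness argument (a stuck nonempty residual set is itself a violating $X'$; conversely, the least layer meeting an arbitrary $X'$ supplies a $\tau$-SNC element via heredity) are exactly right. The technical heart --- that failure of $\tau$-collapsibility for \emph{some} $\mathcal S'\subseteq\mathcal N_{\mathcal S}(e)$ is witnessed by $\tau+1$ sets with pairwise private elements, obtained from an irredundant minimum cover of $\mathcal U(\mathcal S')\cap X_k$ --- is the same private-element argument underlying the cited result, which phrases it as ``it suffices to test subcollections of size $\tau+1$'' (an $m^{O(\tau)}$-type enumeration); your reorganization around $(\tau+1)$-tuples of candidate private elements is equivalent, exploits the fact that the existential choices of $S_1,\dots,S_{\tau+1}$ decouple coordinate-by-coordinate, and has the advantage of directly matching the stated $n^{O(\tau)}$ machine count. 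The only loose end is cosmetic: the per-coordinate check ``does some $S\in\mathcal N_{\mathcal S}(e)$ contain $e_i$ and avoid the other $e_j$'' ranges over up to $m$ sets, so ``constant parallel depth'' per iteration implicitly assumes either extra machines indexed by sets or a logarithmic-depth OR; this does not affect the $L$-iteration count or the machine bound as the lemma states them.
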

For any sub-collection $\mathcal S'\subseteq \mathcal S$ with the form $\mathcal S'=\{S_1,S_2,\ldots,S_{\ell}\}$, denote by $\mathcal S'_i=\{S_1,S_2,\ldots,S_{i}\}$ for $i=1,\ldots,\ell$. For any element $e$, denote by $L(e)$ the layer level of $e$.

\section{Parallel algorithm for MinWSC with $\tau$-SNC}\label{sec3}

In this section, we give a parallel algorithm for MinWSC with $\tau$-SNC property, using a primal-dual method.

\subsection{Algorithm}

MinWSC can be modeled as an integer program as follows, where $x(S)=1$ indicates that set $S$ is picked and $x(S)=0$ otherwise:

\begin{align}\label{1-1}
&\min\,\, \sum\limits_{S\in \mathcal S}x(S)\cdot w(S)\nonumber\\
&\mbox{s.t.}\quad
\begin{cases}
\sum\limits_{S\in \mathcal{S}\colon e\in S}x(S)\geq 1, &  \forall e\in X,\\
x(S)\in \{0,1\}, & \forall S\in \mathcal {S}.
\end{cases}
\end{align}
The integer program \eqref{1-1} can be relaxed to a linear program LP as follows:
\begin{align}\label{1-2}
&\min\,\, \sum\limits_{S\in \mathcal S}x(S)\cdot w(S)\nonumber\\
&\mbox{s.t.}\quad
\begin{cases}
\sum\limits_{S\in \mathcal{S}\colon e\in S}x(S)\geq 1, &  \forall e\in X,\\
x(S)\geq 0, & \forall S\in \mathcal {S}.
\end{cases}
\end{align}
Its dual program is as follows:
\begin{align}\label{1-3}
&\max\,\, \sum\limits_{e\in X}y(e)\nonumber\\
&\mbox{s.t.}\quad
\begin{cases}
\sum\limits_{e\in S}y(e)\leq w(S), &  \forall S\in \mathcal{S},\\
y(e)\geq 0, & \forall e\in X.
\end{cases}
\end{align}

The algorithm consists of two phases$\colon$ {\em forward phase} (line \ref{line4} to \ref{line002} of Algorithm \ref{algo1}) and {\em deletion phase} (Algorithm \ref{algo2}). The forward phase is to construct a collection of subcollections $\mathcal A$ which covers all elements. The deletion phase is to remove some redundant sets. In the deletion phase of \cite{Agarwal.}, the authors defined a {\em primal slackness property} with parameter $\gamma$ as follows: for any element $e$ with $y(e)>0$, $\sum\limits_{S\in \mathcal S\colon e\in S}x(S)\leq \gamma$. Their $\left(1+(2+\varepsilon)\tau^2)\right)$-approximation was derived by showing that $\gamma\leq \tau^2$. Our algorithm can reduce $\gamma$ from $\tau^2$ to down to $\tau$.

The forward phase essentially employs a primal-dual schema: starting from $\{y(e)\equiv 0\}_{e\in X}$, dual variables are increased until some constraint for the dual LP becomes {\em nearly tight}, where a dual constraint corresponding to set $S$ (see \eqref{1-3}) is nearly tight if the remaining weight $w'(S)=w(S)- \sum_{e\in S}y(e)$ satisfies $w'(S)\leq\varepsilon w(S)$, at which time, all nearly tight sets are picked (line \ref{line2} and line \ref{line1222-3} of Algorithm \ref{algo1}). Note that elements are dealt with layer by layer. Call each round of the for loop of Algorithm \ref{algo1} as an {\em epoch}. The $k$th {\em epoch} tries to cover all elements in $F^k$ using a collection $\mathcal C_k$ of picked subcollections, where $F^k$ denotes the set of elements of $Z_k$ which are not covered by subcollections picked in previous epochs. In order to efficiently control the number of rounds, the increase of dual variables is increasing in a geometric manner (see line \ref{line0-0}, \ref{line1222-1}, and \ref{eq1222-2} of Algorithm \ref{algo1}). After the forward phase, we get a feasible solution $\mathcal A$ for MinWSC.

\begin{algorithm}[H]
\caption{ParaMinSC-SNC$(\mathcal{G}=(X,\mathcal{S}, w))$}
\textbf{Input}$\colon$ A set system $(X,\mathcal{S})$ with $\tau$-SNC property and a weight function $w$ on $\mathcal{S}$

\textbf{Output}$\colon$ A feasible set cover $\mathcal{A}^\prime$ of $\mathcal{G}=(X,\mathcal{S}, w)$

\begin{algorithmic}[1]\label{algo1}

\STATE compute the layer decomposition $Z_1,\dots,Z_{L}$ in parallel \label{line4}

\STATE $\mathcal A\leftarrow\emptyset$; $X'\leftarrow  X$; $\mathcal{S}'\leftarrow  \mathcal{S}$;
\STATE $y(e)\leftarrow  0$ for each $e \in X$; $w'(S)\leftarrow  w(S)$ for each $S\in \mathcal{S}$ \label{linee0}
\STATE find a layer decomposition of $X$ using a parallel algorithm as in Lemma \ref{lem1}
\FOR {$k=1$ to $L$}\label{line1}
\STATE $\mathcal C_k=\emptyset$; $F^{k}\leftarrow X'\cap Z_k$; $t\leftarrow 1$\label{line1223-1}
\STATE $q_k\leftarrow  \min_{S\in \mathcal{S}'}w'(S)/|S\cap X'\cap Z_k|$; $\alpha_{k,t}\leftarrow  q_k$ \label{line0-0}
\WHILE{$\mathcal{A}$ does not cover $Z_k$}
        \STATE $y(e)\leftarrow y(e)+\alpha_{k,t} $ for each $e\in X'\cap Z_k$ \label{line1222-1}
         \STATE  $w'(S)\leftarrow w(S)- \sum_{e\in S}y(e) $ for each $S\in \mathcal S'$ \label{line1224}
    \STATE $\mathcal{C}'\leftarrow  \{S\in \mathcal{S}'\colon w'(S)\leq \varepsilon w(S)\}$\label{line2}
    \STATE $\mathcal{C}_{k}\leftarrow  \mathcal C_k\cup \mathcal C'$; $\mathcal A\leftarrow \mathcal A\cup \mathcal C'$\label{line1222-3}
    \STATE $X'\leftarrow  X'\setminus \mathcal{U}(\mathcal{C}')$; $\mathcal{S}'\leftarrow  \mathcal{S}'\setminus \mathcal C'$\label{line1230}
    \STATE $\alpha_{k,t}\leftarrow  q_k\varepsilon(1+\varepsilon)^{t-1}$; $t\leftarrow  t+1$\label{eq1222-2}
  \ENDWHILE
\ENDFOR\label{line002}
\STATE return $\mathcal A'\leftarrow$ Delete($\{\mathcal C_k\}_{k=1}^L,\{F^{k}\}_{k=1}^L$)\label{line1119}

\end{algorithmic}
\end{algorithm}

The goal of the deletion phase is to ensure that the sub-collection $\mathcal{A}^\prime$ output in line \ref{line1119} of Algorithm \ref{algo1} is a feasible solution of MinWSC satisfying the following property:
\begin{equation}\label{eq1125-1}
\mbox{for any $e\in E$ with $y(e)>0$, there are at most $\tau$ sets of $\mathcal A'$ covering $e$.}
\end{equation}
To realize this goal, elements are dealt with in reverse order from layer $L$ down to layer 1. When dealing with elements in layer $k$, $\mathcal B_k$ is kept to be a collection of sets covering
\begin{equation}\label{eq0115-1}
Q_k=\bigcup_{j=k}^L\left(\mathcal U(\mathcal C_j)\cap (\bigcup_{i=j}^LZ_i)\right).
\end{equation}
Initially, $\mathcal B_k=\mathcal A'\cup\mathcal C_k$. Some element $e\in Q_k$ with $y(e)>0$ might be covered by more than $\tau$ sets of $\mathcal B_k$. So, we have to shrink $\mathcal B_k$ to satisfy property \ref{eq1125-1}. Because every element $e\in Q_k$ is a $\tau$-SNC element of $\bigcup_{i=k}^LZ_i$ (by the definition of layer decomposition in Definition \ref{def1223-1}), the neighbors of $e$ in $\bigcup_{i=k}^LZ_i$ can be covered by a base group set consisting of at most $\tau$ sets. So, an idea is to select for each element a base group set from $\mathcal B_k$. However, there is a synchronous problem: it might happen that a set covering both $e$ and $e'$ is picked into the base group set of $e$, but is not picked into the base group set of $e'$, and thus it is still possible for $e'$ to be covered by more than $\tau$ sets of the union of these base group sets. To avoid such an asynchronous problem, the algorithm finds base group sets for a set of maximal independent elements, where a set $I$ is a {\em maximal independent set} if no elements in $I$ are neighbors of each other and adding any element into $I$ destroys the independence property. To realize this idea, a sequence of auxiliary graphs $G_k,\ldots,G_L$ and a sequence of maximal independent sets $I_k,\ldots,I_L$ are constructed as follows: the vertices of $G_j$ are those elements in $Z_j$ not adjacent with $\bigcup_{i=k}^{j-1}I_i$, \begin{equation}\label{eq1216}
\mbox{two vertices of $G_j$ are adjacent if they are neighboring elements in $\mathcal B_k$,}
\end{equation}
and $I_j$ is a maximal independent set of $G_j$. From such a construction,
\begin{equation}\label{eq0114-2}
I=\bigcup_{j=k}^L I_j\ \mbox{is an independent set in $\mathcal B_k$.}
\end{equation}
For each element $e\in I$, find a base group set $\mathcal B(e)$. In line \ref{line1212} of Algorithm \ref{algo2}, $\mathcal B_k$ is set to be the union of these base group sets. Note that
\begin{equation}\label{eq0114-3}
\{\mathcal N_{\mathcal B_k}(e)\}_{e\in I} \ \mbox{are disjoint collection of sets.}
\end{equation}
In fact, if $S$ is a common set of $\mathcal N_{\mathcal B_k}(e)$ and $\mathcal N_{\mathcal B_k}(e')$ for $e\neq e'$, then $e,e'$ are neighbors in $\mathcal B_k$, contradicting \eqref{eq0114-2}. We shall prove in Claim 1 of Lemma \ref{lem11} that any element in $F^{k}$ are covered by at most $\tau^2$ sets of $\mathcal B_k$. This part is essentially the same as that in \cite{Agarwal.}. To further shrink $\mathcal B_k$ to satisfy property \eqref{eq1125-1}, for each element $e\in I$, we flip a fair coin. If it is head, then a subcollection $\mathcal Q(e)$ of $\mathcal N_{\mathcal B_k}(e)$ consisting of $|\mathcal N_{\mathcal B_k}(e)|-1$ sets is picked in a random manner. If it is tail, then all sets in $\mathcal N_{\mathcal B_k}(e)$ are picked into $\mathcal Q(e)$. By \eqref{eq0114-3},
\begin{equation}\label{eq0114-5}
\mbox{how $\{\mathcal Q(e)\}_{e\in I}$ are taken are independent events.}
\end{equation}
The role that $\{\mathcal Q(e)\}_{e\in I}$ plays is to help with finding a collection of sets $\mathcal R'$ which can be deleted from $\mathcal B_k$ without affecting the covering requirement (see line \ref{line1216-3} to \ref{line1205-3}). We shall prove in Lemma \ref{lem11} that after $4\tau^{3}2^\tau \log n$ rounds of the while loop, the elements in $F^k$ satisfy property \eqref{eq1125-1} with high probability.

\begin{algorithm}[H]
\caption{Delete$(\{\mathcal C_k\}_{k=1}^L,\{F^{k}\}_{k=1}^L)$}
\textbf{Input}$\colon$ $\{\mathcal C_k\}_{k=1}^L,\{F^{k}\}_{k=1}^L$ computed by Algorithm \ref{algo1}

\textbf{Output}$\colon$ A feasible solution  $\mathcal{A}^\prime$ of $\mathcal{G}=(X,\mathcal{S}, w)$ satisfying property \eqref{eq1125-1}.

\begin{algorithmic}[1]\label{algo2}
\STATE $\mathcal A'\leftarrow \emptyset$
\FOR {$k=L$ to 1}\label{line1202}
\STATE $i\leftarrow 0$; $I\leftarrow \emptyset$
\STATE $\mathcal B_k\leftarrow \mathcal A'\cup \mathcal C_k$\label{line1211}
\FOR {$j=k$ to $L$}\label{line1213}
\STATE find a maximal independent set $I_j$ of $G_j$ constructed in \eqref{eq1216}
\STATE $I\leftarrow I\cup I_j$
\ENDFOR \label{label0102-1}
\STATE for each $e\in I$, find a base group set $\mathcal B(e)$ for $e$ restricted to $(\bigcup_{j=k}^LZ_j,\mathcal N_{\mathcal B_k}(e))$ \label{line1213-1}
\STATE $\mathcal B_k\leftarrow \bigcup_{e\in I}\mathcal B(e)$\label{line1212}
\WHILE {$i\leq 4\tau^{3}2^\tau \log n$}
\FOR {each $e\in I$ in parallel}
\STATE flip a fair coin
\IF {head}
\STATE $\mathcal Q(e)\leftarrow$ $(|\mathcal N_{\mathcal B_k}(e)|-1)$ sets picked randomly and uniformly from $\mathcal N_{\mathcal B_k}(e)$\label{line1227}
\ELSE
\STATE $\mathcal Q(e)\leftarrow \mathcal N_{\mathcal B_k}(e)$
\ENDIF
\ENDFOR
\STATE $\mathcal R'\leftarrow \emptyset$\label{line1216-3}
\FOR {each $S\in \mathcal B_k\setminus \left(\bigcup_{e\in I}\mathcal Q(e)\right)$ in parallel}\label{line1208}
\IF {$\mathcal U(S)\cap (\bigcup_{j=k}^LZ_j)\subseteq \mathcal U(\bigcup_{e\in I}\mathcal Q(e))$}\label{label1229-1}
\STATE $\mathcal R'\leftarrow \mathcal R'\cup \{S\}$\label{line1209}
\ENDIF
\ENDFOR\label{line1208-1}
\STATE $\mathcal B_k\leftarrow \mathcal B_k\setminus\mathcal R'$\label{line1205-3}
\STATE $i\leftarrow i+1$
\ENDWHILE
\STATE $\mathcal A'\leftarrow \mathcal B_k$\label{line1220}
\ENDFOR
\end{algorithmic}
\end{algorithm}

%
%

\subsection{Analysis}

The following lemma shows the feasibility of $\mathcal A'$ output by Algorithm \ref{algo1}.
\begin{lemma}\label{lem1105}
The collection $\mathcal A'$ returned by Algorithm \ref{algo1} is a set cover.
\end{lemma}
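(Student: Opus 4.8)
The plan is to prove feasibility phase by phase, handling the forward phase of Algorithm \ref{algo1} and the deletion phase (Algorithm \ref{algo2}) separately and then chaining them. First I would argue that the forward phase produces a collection $\mathcal A=\bigcup_{k=1}^L\mathcal C_k$ covering $X$. The guard of the while loop in epoch $k$ is exactly that $\mathcal A$ does not yet cover $Z_k$; since sets are only ever added to $\mathcal A$, coverage is monotone, so once the loop exits we have $Z_k\subseteq\mathcal U(\mathcal A)$, and as $X=\bigcup_{k=1}^L Z_k$ this gives $X\subseteq\mathcal U(\mathcal A)$. The only thing to verify is that each loop does exit. I would fix any still-uncovered $e\in X'\cap Z_k$ and any set $S\in\mathcal N_{\mathcal S}(e)$ (one exists because the instance is coverable). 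While $e$ stays uncovered, line \ref{line1222-1} raises $y(e)$ by the positive amount $\alpha_{k,t}$ in every iteration, and by line \ref{eq1222-2} these amounts grow geometrically, so their partial sums diverge; hence $\sum_{e'\in S}y(e')$ eventually exceeds $(1-\varepsilon)w(S)$, i.e. $w'(S)\le\varepsilon w(S)$, so $S$ is added to $\mathcal C'$ at line \ref{line2} and $e$ becomes covered. Thus each loop terminates with $Z_k$ covered.

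Second I would establish, by downward induction on $k$, the invariant announced at \eqref{eq0115-1}: after Algorithm \ref{algo2} finishes its outer iteration for layer $k$, the current collection covers $Q_k$. The inductive step starts from $\mathcal B_k\leftarrow\mathcal A'\cup\mathcal C_k$, which covers $Q_k=(\mathcal U(\mathcal C_k)\cap\bigcup_{i=k}^LZ_i)\cup Q_{k+1}$, since $\mathcal C_k$ covers the first part and $\mathcal A'$ covers $Q_{k+1}$ by hypothesis. I then must check the two operations on $\mathcal B_k$ keep $Q_k\subseteq\bigcup_{i=k}^LZ_i$ covered. The repeated removal of $\mathcal R'$ is easy: line \ref{line1205-3} deletes a set $S$ only when $\mathcal U(S)\cap\bigcup_{j=k}^LZ_j$ is already contained in $\mathcal U(\bigcup_{e\in I}\mathcal Q(e))$, and the retained family $\bigcup_{e\in I}\mathcal Q(e)$ survives the deletion, so no element of $\bigcup_{j\ge k}Z_j$ is uncovered. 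Concluding, $\mathcal A'=\mathcal B_1$ covers $Q_1$, and since every $e\in X$ is first covered in some epoch $j\le L(e)$ we get $e\in\mathcal U(\mathcal C_j)\cap\bigcup_{i=j}^LZ_i\subseteq Q_1$, whence $X\subseteq Q_1\subseteq\mathcal U(\mathcal A')$.

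The step I expect to be the main obstacle is the base-group replacement $\mathcal B_k\leftarrow\bigcup_{e\in I}\mathcal B(e)$ at line \ref{line1212}, which throws away everything except the base group sets of the independent set $I$. Here I would show coverage is not lost, relying on the maximality built into \eqref{eq1216}: for any $u\in Z_m$ with $m\ge k$ that was covered before line \ref{line1212}, either $u$ is a neighbor in $\mathcal B_k$ of some $e\in\bigcup_{i=k}^{m-1}I_i$, or $u$ is a vertex of $G_m$ and, by maximality of $I_m$, lies in $I_m$ or is adjacent to some $e\in I_m$. In every case $u\in\mathcal U(\mathcal N_{\mathcal B_k}(e))\cap\bigcup_{i=k}^LZ_i=RN(e,\bigcup_{i=k}^LZ_i,\mathcal N_{\mathcal B_k}(e))$ for some $e\in I$, which $\mathcal B(e)$ covers by the definition of a base group set. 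Verifying that this case analysis is exhaustive and interacts correctly with the layered construction of $G_k,\dots,G_L$ is the delicate part of the argument.
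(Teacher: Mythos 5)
Your proof is correct and follows essentially the same route as the paper: a downward induction on $k$ showing that $\mathcal B_k$ covers $Q_k$, with the key step being that any covered element of $\bigcup_{j\geq k}Z_j$ is, by maximality of the independent sets, a neighbor in $\mathcal B_k$ of some $e\in I$ and hence covered by the base group set $\mathcal B(e)$ via the SNC property. Your additional verification that the forward-phase while loops terminate (so that $\mathcal A$ indeed covers $X$) is a detail the paper leaves implicit, but it does not change the substance of the argument.
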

\begin{proof}
We shall prove by induction on $k$ from $L$ down to $1$ that
\begin{equation}\label{eq1220-1}
\mbox{in Algorithm \ref{algo2}, after processing layer $k$, $\mathcal B_k$ covers all elements of $Q_k$,}
\end{equation}
where $Q_k$ is defined in \eqref{eq0115-1}. Then the lemma follows from \eqref{eq1220-1} for $k=1$ since $X=Q_1$.

First consider the inductive basis when $k=L$. Initially $\mathcal B_L=\mathcal C_L$, which covers $Q_L=\mathcal U(\mathcal C_L)\cap Z_L$. For any $e\in Q_L$, if $e\in I_L$, then $e$ is covered by $\mathcal B_L$ constructed in line \ref{line1212}. Otherwise $e\notin I_L$, and by the maximality of $I_L$, element $e$ is adjacent with an element $e'\in I_L$. By the SNC property, base group set $\mathcal B(e')$ covers $e$. In any case, $\mathcal B_L$ in line \ref{line1212} covers $Q_L$. Since line \ref{line1208} to line \ref{line1205-3} only removes redundant sets, $\mathcal B_L$ in line \ref{line1205-3} still covers all elements of $Q_L$. The inductive basis is proved.

Suppose \eqref{eq1220-1} holds for $k+1$ with $k\leq L-1$. By the induction hypothesis, $\mathcal A'=\mathcal B_{k+1}$ covers $Q_{k+1}$. Adding $\mathcal C_k$ into $\mathcal A'$, the resulting $\mathcal B_k$ in line \ref{line1211} covers $Q_k\subseteq Q_{k+1}\cup\mathcal U(\mathcal C_k)$. By a similar argument as the above, after removing redundant sets, $\mathcal B_k$ covers $Q_k$. The induction step is completed.
\end{proof}

The next lemma gives some important properties of $\mathcal A'$.

\begin{lemma}\label{lem11}
The collection $\mathcal{A}^\prime$ returned by Algorithm \ref{algo1} satisfies the following properties:

$(\romannumeral1)$ with high probability, any $e\in \bigcup_{j=1}^LF^{j}$ is covered by at most $\tau$ sets of $\mathcal{A}^\prime$;

$(\romannumeral2)$ for any $S\in \mathcal{A}^\prime$, $(1-\varepsilon)w(S)\leq\sum_{e\in S}\,y(e)\leq w(S)$, where $\{y(e)\}_{e\in X}$ is the set of dual variables at the end of the algorithm.
\end{lemma}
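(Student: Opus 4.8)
The plan is to prove $(\romannumeral2)$ first, as it depends only on the forward phase, and then $(\romannumeral1)$.

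For $(\romannumeral2)$, fix a set $S$ and suppose it is picked during iteration $T$ of epoch $k$. The lower bound is immediate: $S$ enters $\mathcal C'$ only when $w'(S)\le\varepsilon w(S)$ (line \ref{line2}), so at that instant $\sum_{e\in S}y(e)=w(S)-w'(S)\ge(1-\varepsilon)w(S)$; since every element of $S\subseteq\mathcal U(\mathcal C')$ leaves $X'$ in the same iteration (line \ref{line1230}), no $y(e)$ with $e\in S$ is raised thereafter and the bound persists. For the upper bound I would exploit the geometric schedule through the identity that a layer-$k$ element incremented in every iteration has dual value $q_k(1+\varepsilon)^{t-1}$ after iteration $t$; equivalently, for $t\ge 2$ the increment equals $\varepsilon$ times the element's current value, so one iteration multiplies it by $(1+\varepsilon)$. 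Writing $\sum_{e\in S}y(e)=A+B$ with $A$ the frozen contribution of elements of $S$ outside $X'\cap Z_k$ and $B$ that of the still-active layer-$k$ elements, iteration $T$ sends the sum to $A+(1+\varepsilon)B\le(1+\varepsilon)(A+B)$. As $S$ was not picked before iteration $T$ we have $A+B<(1-\varepsilon)w(S)$, so the final value is below $(1-\varepsilon^{2})w(S)<w(S)$; the case $T=1$ follows directly from the choice of $q_k$ in line \ref{line0-0}, which forces $A+q_k|S\cap X'\cap Z_k|\le w(S)$ for every surviving $S$.

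For $(\romannumeral1)$ I would first isolate the deterministic skeleton. Feasibility and monotonicity: each round of the while loop in Algorithm \ref{algo2} deletes only sets of $\mathcal R'$, whose trace on $\bigcup_{j\ge k}Z_j$ is covered by the retained sets (line \ref{label1229-1}), so as in Lemma \ref{lem1105} $\mathcal B_k$ keeps covering $Q_k$ and the multiplicity of any fixed element is non-increasing. Persistence across layers: the guarantee obtained for $F^{k}$ while processing layer $k$ is never undone, because for $k'<k$ the collection $\mathcal C_{k'}$ adjoined in line \ref{line1211} meets no element of $F^{k}$ (by definition $F^{k}$ consists of layer-$k$ elements still uncovered at the start of epoch $k$, hence missed by $\mathcal C_1,\dots,\mathcal C_{k-1}$), and all later steps only delete sets. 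Starting bound (Claim 1, following \cite{Agarwal.}): just after line \ref{line1212} every $e\in F^{k}$ lies in at most $\tau^{2}$ sets; indeed the base groups $\{\mathcal N_{\mathcal B_k}(e')\}_{e'\in I}$ are disjoint by \eqref{eq0114-3}, any set covering $e$ sits in the base group of an $I$-neighbour of $e$, and since $I$ is independent in $\mathcal B_k$ each of the $\le\tau$ sets of a $\tau$-collapsing family of $e$ holds at most one element of $I$, so $e$ has at most $\tau$ neighbours in $I$, each furnishing at most $\tau$ sets.

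The core is a per-round decrease whose probability is bounded below by a function of $\tau$ only. Fix a round at whose start some $e\in F^{k}$ is still \emph{bad}, i.e. covered by $m>\tau$ sets. Let $\mathcal B^{\ast}(e)\subseteq\mathcal N_{\mathcal B_k}(e)$ be a $\tau$-collapsing family of $e$ relative to $(\bigcup_{j\ge k}Z_j,\mathcal N_{\mathcal B_k}(e))$, so $|\mathcal B^{\ast}(e)|\le\tau$ and $\mathcal U(\mathcal B^{\ast}(e))$ contains every neighbour of $e$ in $\bigcup_{j\ge k}Z_j$; as $m>\tau$ there is a further set $S_0\in\mathcal N_{\mathcal B_k}(e)\setminus\mathcal B^{\ast}(e)$, and every element of $S_0$ in $\bigcup_{j\ge k}Z_j$ is a neighbour of $e$, hence already in $\mathcal U(\mathcal B^{\ast}(e))$. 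I would then check that if the coins fall so that $S_0$ is the single set dropped from its own base group $\mathcal N_{\mathcal B_k}(e_0)$ (where $e_0\in I$ is the unique owner of $S_0$) and every set of $\mathcal B^{\ast}(e)$ is retained, then $S_0$ passes the test in line \ref{label1229-1} and is removed, strictly lowering the coverage of $e$ while, by the previous paragraph, feasibility is preserved. Because base groups are disjoint and, by \eqref{eq0114-5}, the coin flips over the at most $\tau$ relevant neighbours are independent, dropping exactly $S_0$ has probability $\tfrac12\,|\mathcal N_{\mathcal B_k}(e_0)|^{-1}\ge\tfrac1{2\tau}$, while each remaining group contributes a factor $\ge\tfrac12$ (its coin being tail retains all its sets), for a total of at least $1/(\tau 2^{\tau})$.

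It remains to amplify this constant. Since the coverage of $e$ is at most $\tau^{2}$ and never rises, $e$ becomes good after at most $\tau^{2}$ successful rounds, and conditioned on any history in which $e$ is still bad the next round succeeds with probability at least $1/(\tau 2^{\tau})$. Coupling with independent $\mathrm{Bernoulli}(1/(\tau 2^{\tau}))$ trials and applying a Chernoff bound, over $T=4\tau^{3}2^{\tau}\log n$ rounds (mean number of successes $4\tau^{2}\log n$) the count stays below $\tau^{2}$ only with probability $n^{-\Omega(\tau^{2})}$; a union bound over the at most $n$ elements of $\bigcup_j F^{j}$ then yields \eqref{eq1125-1} with high probability. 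I expect the per-round step above to be the main obstacle: one must delete the redundant $S_0$ while simultaneously guaranteeing that the specific $\le\tau$ sets whose union still covers the neighbourhood of $e$ all survive, so the deletion cannot break feasibility. The disjointness of the base groups is exactly what decouples ``drop $S_0$'' from ``keep the collapsing family,'' letting the success probability be written as a product of independent coin events and bounded below by $1/(\tau 2^{\tau})$.
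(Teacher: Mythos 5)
Your proposal is correct and follows essentially the same route as the paper: the same reduction to a per-layer guarantee for $F^{k}$, the same $\tau^{2}$ starting bound via disjoint base groups and at most $\tau$ independent neighbours, the same per-round deletion event (retain a collapsing family, drop one redundant set) with success probability at least $1/(\tau 2^{\tau})$, and the same primal-dual accounting for property $(\romannumeral2)$. The only cosmetic differences are that you prove the upper bound in $(\romannumeral2)$ by a direct ``one iteration multiplies the active contribution by $(1+\varepsilon)$'' computation where the paper argues by contradiction with the summed geometric series, and you amplify the constant success probability via a Chernoff bound where the paper uses Markov's inequality over blocks of $2\tau^{3}2^{\tau}$ rounds; both variants are valid and yield the stated bounds.
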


\begin{proof}[Proof of property $(\romannumeral1)$] It suffices to prove that after the $k$th epoch of Algorithm \ref{algo2}, which is the $k$th round of the outer for loop,
\begin{equation}\label{eq1220-2}
\mbox{any element in $F^k$ is covered by at most $\tau$ sets of $\mathcal A'$ with high probability.}
\end{equation}
In fact, notice that for any $j<k$, no set of $\mathcal C_j$ can cover any element in $F^k$ (because of the definition of $F^k$ in line \ref{line1223-1} of the forward phase). So, as long as property \eqref{eq1220-2} can be proved after the $k$th epoch, this property is maintained throughout processing layers $k-1,k-2,\ldots,1$, and thus property $(\romannumeral1)$ follows.

In the following, all labels of lines we mention refer to Algorithm \ref{algo2}. We first bound $|\mathcal N_{\mathcal B_k}(e)|$ for any element $e\in F^{k}$, where $\mathcal B_k$ is the collection in line \ref{line1212}.

\vskip 0.2cm {\bf Claim 1.} For the collection $\mathcal B_k$ in line \ref{line1212}, any element $e\in F^{k}$ satisfies $|\mathcal N_{\mathcal B_k}(e)|\leq \tau^2$.

If $e\in I$, then those sets in $\mathcal B(e)$ are the only sets of $\mathcal B_k$ covering $e$. In fact, if there is a set $S\in\mathcal B(e')$ covering $e$, where $e'$ is an element of $I$ which is different from $e$, then $S$ covers both $e$ and $e'$, and thus $e$ and $e'$ are neighbors in $\mathcal B_k$, contradicting that $I$ is an independent set of $\mathcal B_k$. It follows that
\begin{equation}\label{eq1228-1}
|\mathcal N_{\mathcal B_k}(e)|=|\mathcal B(e)|\leq \tau\ \mbox{for any element $e\in I$.}
\end{equation}

Next, consider an element $e\in F^k\setminus I$. We first prove that
\begin{equation}\label{eq1223-3}
I\ \mbox{has at most $\tau$ elements which are neighbors of $e$ in $\mathcal B_k$}.
\end{equation}
Denote by $E'$ the set of elements in $I$ which are neighbors of $e$ in $\mathcal B_k$, and let $\mathcal B'=\bigcup_{e'\in E'}\{$an arbitrary set in $\mathcal R_{\mathcal B_k}(e',e)\}$ (recall that $\mathcal R_{\mathcal B_k}(e',e)$ is the collection of sets in $\mathcal B_k$ containing both $e$ and $e'$). Since $e$ is a $\tau$-SNC element of $\bigcup_{j=k}^LZ_j$, every set of $\mathcal B'$ contains $e$, and elements in $E'$ are neighbors of $e$ in $\mathcal B'$, there exists a subcollection $\mathcal B''\subseteq \mathcal B'$ with size at most $\tau$ such that $\mathcal B''$ covers $E'\cup \{e\}$. If $|E'|\geq \tau+1$, then by the pigeonhole principle, there exists a set $S\in \mathcal B''$ covering two elements of $E'$, contradicting that $E'$ is independent. So, \eqref{eq1223-3} is proved.

Notice that a set $S\in \mathcal N_{\mathcal B_k}(e)=\mathcal N_{\bigcup_{e'\in I}\mathcal B(e')}(e)$ implies that $e\in S$ and there exists an element $e'\in I$ such that $S\in\mathcal B(e')$. So, $S$ contains both $e$ and $e'$, and thus $e'$ is a neighbor of $e$ in $I$. Combining this with \eqref{eq1223-3} and the fact that every base group set has size at most $\tau$, we have
$$
|\mathcal N_{\mathcal B_k}(e)|=|\mathcal N_{\bigcup_{e'\in I}\mathcal B(e')}(e)|\leq\tau^2.
$$
Claim 1 is proved.

\vskip 0.2cm {\bf Claim 2.} Let $E''=\{e\in F^k\colon |\mathcal N_{\mathcal B_k}(e)>\tau\}$, where $\mathcal B_k$ is the collection of sets in line \ref{line1212}. At the end of the while loop of Algorithm \ref{algo2}, the updated $\mathcal B_k$ satisfies that for any element $e\in E''$, $|\mathcal N_{\mathcal B_k}(e)|\leq \tau$ with high probability.

By \eqref{eq1228-1}, $E''\cap I=\emptyset$. For any $e\in E''$, let $A(e)$ be the event that $|\mathcal N_{\mathcal B_k}(e)|$ is decreased by at least 1 after one while loop. We shall prove that probability $Pr(A(e))$ satisfies
\begin{equation}\label{eq1221}
Pr(A(e))\geq \frac{1}{2^\tau \tau}.
\end{equation}

To prove \eqref{eq1221}, consider collection $\mathcal B_k$ in line \ref{line1205-3} at the end of the $i$-th while loop, and a base group set $\mathcal{B}(e)$ restricted to $(\mathcal U(\mathcal N_{\mathcal B_k}(e))\cap \bigcup_{j=k}^LZ_j,\mathcal N_{\mathcal B_k}(e))$. The idea is: in the $(i+1)$-th while loop, for the selected collection of sets $\{\mathcal Q(e')\}_{e'\in I}$, if $\mathcal{B}(e)\subseteq \bigcup_{e'\in I}\mathcal Q(e')$, then by the SNC property, any set $S'\in \mathcal N_{\mathcal B_k}(e)\setminus \bigcup_{e'\in I}\mathcal Q(e')$ satisfies $\mathcal U(S')\cap \bigcup_{j=k}^LZ_j\subseteq \mathcal U(\mathcal{B}(e))\subseteq \mathcal U(\bigcup_{e'\in I}\mathcal Q(e'))$, meeting the condition in line \ref{label1229-1}, and thus can be deleted in line \ref{line1205-3}, resulting in event $A(e)$ if $\mathcal N_{\mathcal B_k}(e)\setminus \bigcup_{e'\in I}\mathcal Q(e')\neq\emptyset$. So in the following, we lower bound the probability of $\big[\mathcal{B}(e)\subseteq \bigcup_{e'\in I}\mathcal Q(e')\big]\wedge \big[\mathcal N_{\mathcal B_k}(e)\setminus \bigcup_{e'\in I}\mathcal Q(e')\neq\emptyset\big]$.

For each $S\in\mathcal N_{\mathcal B_k}(e)$, there exists an element $e'\in I$ with $S\in \mathcal N_{\mathcal B_k}(e')$, since every set in $\mathcal B_k$ comes from a base group set of some $e'\in I$. Let $I'=\{e'\in I\colon \mathcal{B}(e)\cap \mathcal N_{\mathcal B_k}(e')\neq\emptyset\}$. Note that
\begin{equation}\label{eq0115-5}
\mbox{any $e'\in I'$ is a neighbor of $e$.}
\end{equation}
For $e'\in I'$, let $A'(e')$ be the event that $\mathcal{B}(e)\cap \mathcal N_{\mathcal B_k}(e')$ remains to be in $\mathcal Q(e')$. Its probability
\begin{equation}\label{eq1230-1}
Pr(A'(e'))\geq Pr(\mathcal Q(e')\leftarrow \mathcal N_{\mathcal B_k}(e'))=1/2.
\end{equation}

Since $|\mathcal N_{\mathcal B_k}(e)|>\tau$ and $|\mathcal{B}(e)|\leq \tau$, we have $\mathcal N_{\mathcal B_k}(e)\setminus \mathcal{B}(e)\neq\emptyset$. Consider a set $S'\in \mathcal N_{\mathcal B_k}(e)\setminus \mathcal{B}(e)$, let $e''$ be an element in $I$ with $S'\in \mathcal N_{\mathcal B_k}(e'')$. For simplicity of statement, we only argue about the case when $e''\in I'$ (the case $e''\in I\setminus I'$ can be obtained similarly with an even simpler argument). In the case $e''\in I'$, we have to consider the probability $Pr(A'(e'')\wedge S'\not\in \mathcal Q(e''))$, which equals
$$
Pr(|\mathcal Q(e'')|=|\mathcal N_{\mathcal B_k}(e'')|-1)\cdot Pr(\mbox{$S'$ is the set not picked into $\mathcal Q(e'')$}).
$$
Since $1\leq |\mathcal N_{\mathcal B_k}(e'')|\leq \tau$, we have
$$
Pr(\mbox{$S'$ is the set not picked into $\mathcal Q(e'')$})=\frac{1}{|\mathcal N_{\mathcal B_k}(e'')|}\geq\frac{1}{\tau}.
$$
Hence
\begin{equation}\label{eq1231-1}
Pr(A'(e'')\wedge S'\not\in \mathcal Q(e''))\geq\frac{1}{2\tau}.
\end{equation}

Since any element in $I'$ (including $e''$) is a neighbor of $e$ (see \eqref{eq0115-5}), by \eqref{eq1223-3}, we have \begin{equation}\label{eq1231-3}
|I'\setminus\{e''\}|\leq \tau-1.
\end{equation}
Since we have shown in \eqref{eq0114-5} that how $\{\mathcal Q(e')\}_{e'\in I}$ are taken are independent events, the events $\{A'(e')\}_{e'\in I'\setminus \{e''\}}$ as well as the event $A'(e'')\wedge \big[S'\not\in \mathcal Q(e'')\big]$ are mutually independent. Combining this observation with \eqref{eq1230-1}, \eqref{eq1231-1}, \eqref{eq1231-3}, the probability
$$
P(\big[\mathcal{B}(e)\subseteq \bigcup_{e'\in I}\mathcal Q(e')\big]\wedge \big[\mathcal N_{\mathcal B_k}(e)\setminus \bigcup_{e'\in I}\mathcal Q(e')\neq\emptyset\big])\geq \left(\frac12\right)^{|I'\setminus\{e''\}|}\cdot \frac{1}{2\tau}\geq \left(\frac12\right)^{\tau-1}\cdot \frac{1}{2\tau}=\frac{1}{2^\tau \tau},
$$
and inequality \eqref{eq1221} follows.

Denote by $T(e)$ the number of rounds for $|\mathcal N_{\mathcal B_k}(e)|$ to decrease below $\tau$. Combining Claim 1 and inequality \eqref{eq1221}, we have
$$
E(T(e))\leq \frac{\tau^2-\tau}{Pr(A(e))}\leq \tau^{3}2^\tau.
$$
By Markov's inequality,
\begin{equation}\label{eq1205-1}
Pr\left(T(e)>2\tau^{3}2^\tau \right)\leq 1/2.
\end{equation}

Let $C(e)$ be the event that $|\mathcal N_{\mathcal B_k}(e)|\leq \tau$ after $4\tau^{3}2^\tau \log n$ rounds. By inequality \eqref{eq1205-1},
\begin{equation}
Pr(\overline{C(e)})\leq (1/2)^{\frac{4\tau^{3}2^\tau\log n}{2\tau^{3}2^\tau }}=\frac{1}{n^2}.
\end{equation}
By the union bound,
$$
Pr(\bigcup_{e\in F^k}\overline{C(e)})\leq \frac{1}{n}, \ \mbox{and thus}\ P(\bigwedge_{e\in F^k}C(e))\geq 1-\frac{1}{n}.
$$
Claim 2 is proved. And then, by the argument at the beginning, property $(\romannumeral1)$ is proved.
\end{proof}

\begin{proof}[Proof of property $(\romannumeral2)$] Assume that the number of iterations in the $k$th epoch of Algorithm \ref{algo1} is $T_k$. For the $t$th iteration of the while loop in the $k$th epoch, denote by $w_{k,t}(S)$ the residual weight $w'(S)$ in line \ref{line1224}, $X_{k,t}$ the residual element set $X'$ in line \ref{line1230}, and $y_{k,t}(e)$ the current dual variable for element $e$. Suppose $S$ is picked into $\mathcal A$ in the $t(S)$th iteration of the $k(S)$th epoch. By the criteria of how $S$ can be picked (see line \ref{line2}), we have
\begin{align}\label{eq1-1}
w(S)- \sum_{e\in S}y_{k(S),t(S)}(e)=w_{k(S),t(S)}(S)\leq\varepsilon w(S).
\end{align}
Note that after $S$ is picked, any element $e\in S$ has its dual variable $y(e)$ frozen (which means that $y(e)$ does not increase its value any more), and thus \begin{equation}\label{eq0101-1}
y(e)=y_{k(S),t(S)}(e)\ \mbox{for any $e\in S$.}
\end{equation}
Hence the final dual variables $\{y(e)\}_{e\in X}$ satisfy
$$
\sum_{e\in S}y(e)=\sum_{e\in S}y_{k(S),t(S)}(e)\geq(1-\varepsilon)w(S).
$$
The left side of $(\romannumeral2)$ is proved.

To prove the right side of $(\romannumeral2)$, notice that $w_{k(S),t(S)}(S)=w(S)- \sum_{e\in S}y(e)$ holds from the time that $S$ is picked, so it suffices to show that
\begin{align}\label{eq0001-2}
w_{k(S),t(S)}(S)\geq 0.
\end{align}
Since $S$ is not picked before the $k(S)$th epoch, we have $w'(S)>\varepsilon w(S)>0$ at the beginning of the $k(S)$th epoch. Suppose $w_{k(S),t(S)}(S)<0$, then $t(S)>1$, $w_{k(S),t(S)-1}(S)> \varepsilon w(S)$, and in the $(t(S)-1)$th iteration, every element $e\in X_{k(S),t(S)-1}\cap Z_{k(S)}$ has its $y$-value increased by $\alpha_{k(S),t(S)-1}$. It follows that
\begin{align}\label{eq1-3}
w_{k(S),t(S)}(S)=w_{k(S),t(S)-1}(S)-\alpha_{k(S),t(S)-1}|S\cap X_{k(S),t(S)-1}\cap Z_{k(S)}|.
\end{align}
Under the assumption $w_{k(S),t(S)}(S)<0$, we have
\begin{align}\label{eq0001-4}
\alpha_{k(S),t(S)-1}|S\cap X_{k(S),t(S)-1}\cap Z_{k(S)}|>w_{k(S),t(S)-1}(S)>\varepsilon w(S).
\end{align}
It can be calculated that
\begin{align*}
w(S) & =w_{k(S),t(S)-1}+\sum\limits_{k=1}^{k(S)-1}\sum\limits_{t=1}^{T_k}\alpha_{k,t}|S\cap X_{k,t}\cap Z_k|+\sum\limits_{t=1}^{t(S)-2}\alpha_{k(S),t}|S\cap X_{k(S),t}\cap Z_{k(S)}|\nonumber \\
& \geq \sum\limits_{t=1}^{t(S)-2}\alpha_{k(S),t}|S\cap X_{k(S),t}\cap Z_{k(S)}|\\
&\geq\sum_{t=1}^{t(S)-2}\alpha_{k(S),t}|S\cap X_{k(S),t(S)-1}\cap Z_{k(S)}| \nonumber \\
&=|S\cap X_{k(S),t(S)-1}\cap Z_{k(S)}|\cdot q_{k(S)} (1+\sum_{t=0}^{t(S)-3}\varepsilon(1+\varepsilon)^{t})\nonumber \\
&=|S\cap X_{k(S),t(S)-1}\cap Z_{k(S)}|\cdot q_{k(S)}(1+\varepsilon)^{t(S)-2}\nonumber \\
&=|S\cap X_{k(S),t(S)-1}\cap Z_{k(S)}|\cdot \alpha_{k(S),t(S)-1}\cdot\frac{1}{\varepsilon} \nonumber \\
&> w(S),
\end{align*}
where the last inequality comes from \eqref{eq0001-4}. This contradiction establishes \eqref{eq0001-2}, and thus property $(\romannumeral2)$ is proved.
\end{proof}

\begin{lemma}\label{lem2}
For any $k\in\{1,\dots, L\}$, the number of rounds during the $k$th epoch in the forward phase is at most $O(\log_{1+\varepsilon}\frac{nw_{\max}}{\varepsilon w_{\min}})$, where $w_{\max}=\max\{ w(S)\colon S\in \mathcal{S}\}$ and $w_{\min}=\min\{ w(S)\colon S\in \mathcal{S}\}$.
\end{lemma}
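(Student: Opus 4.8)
The plan is to bound the number $T_k$ of iterations of the \textbf{while} loop in the $k$th epoch (each iteration being a constant number of parallel rounds) by tracking how fast the dual variables of the still-uncovered elements of $Z_k$ grow, and then showing that this growth alone forces every set still covering such an element to become nearly tight, so the loop must stop.

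First I would quantify the dual accumulated on an element. During epoch $k$ only the elements of $Z_k$ receive increments (line \ref{line1222-1}), and an element of $Z_k$ is untouched in every earlier epoch, so each $e\in F^{k}$ enters epoch $k$ with $y(e)=0$. The increment used in the first iteration is $q_k$, and the geometric rule in line \ref{eq1222-2} multiplies the increment by $(1+\varepsilon)$ each subsequent round. Summing this geometric series, I would show that after $t$ iterations every still-uncovered $e\in F^{k}$ satisfies $y(e)=q_k(1+\varepsilon)^{t-1}$. Next I would pin down a crude lower bound on $q_k$: at the start of the epoch every $S\in\mathcal S'$ is unpicked, so by the picking rule in line \ref{line2} together with the monotone decrease of $w'(S)$ it satisfies $w'(S)>\varepsilon w(S)\ge\varepsilon w_{\min}$; since $|S\cap X'\cap Z_k|\le n$, the minimizer defining $q_k$ in line \ref{line0-0} yields $q_k>\varepsilon w_{\min}/n$.

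The termination step combines these two facts. If some $e\in F^{k}$ is still uncovered upon entering iteration $t$, then it still lies in $X'\cap Z_k$ and at least one set $S\ni e$ remains in $\mathcal S'$ (otherwise $e$ would already be covered); after the increment of line \ref{line1222-1} this set has $w'(S)=w(S)-\sum_{e'\in S}y(e')\le w(S)-y(e)=w(S)-q_k(1+\varepsilon)^{t-1}$. Hence as soon as $q_k(1+\varepsilon)^{t-1}\ge w_{\max}\ge w(S)$ we get $w'(S)\le 0\le\varepsilon w(S)$, so $S$ is picked in line \ref{line2} and $e$ is covered; consequently the loop cannot survive past the least $t$ with $q_k(1+\varepsilon)^{t-1}\ge w_{\max}$. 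Using $q_k>\varepsilon w_{\min}/n$ this gives
\[
T_k\le 1+\Bigl\lceil\log_{1+\varepsilon}\frac{w_{\max}}{q_k}\Bigr\rceil\le \log_{1+\varepsilon}\frac{nw_{\max}}{\varepsilon w_{\min}}+O(1)=O\!\left(\log_{1+\varepsilon}\frac{nw_{\max}}{\varepsilon w_{\min}}\right).
\]

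I expect the only delicate point to be the first step: reading the geometric update in line \ref{eq1222-2} correctly so that the increments telescope to the clean closed form $q_k(1+\varepsilon)^{t-1}$ (the first iteration contributes the unscaled $q_k$, while iteration $t\ge 2$ contributes $q_k\varepsilon(1+\varepsilon)^{t-2}$), since the whole round count hinges on this exponential rate. Once that identity and the lower bound on $q_k$ are established, the ``a single uncovered element already makes all of its covering sets nearly tight'' observation is routine, and the logarithmic bound follows immediately.
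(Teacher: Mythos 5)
Your proof is correct and follows essentially the same route as the paper: both rest on the identity $\sum_{t=1}^{T_k}\alpha_{k,t}=q_k(1+\varepsilon)^{T_k-1}$, the lower bound $q_k>\varepsilon w_{\min}/n$, and the observation that this accumulated amount cannot exceed $w_{\max}$ before the epoch terminates. The only (harmless) difference is the bookkeeping object: the paper bounds the total weight reduction of the set picked in the last round, invoking the separately proved nonnegativity of residual weights, whereas you track the dual variable of a still-uncovered element and argue directly that any set containing it becomes nearly tight once $y(e)\geq w(S)$.
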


\begin{proof}
Suppose the $k$th epoch is executed $T_k$ rounds. Consider the set $S$ which is picked into $\mathcal A$ in the last round of the $k$th epoch. Notice that for any $t=1,\ldots,T_k$, we have $\mathcal U(S)\cap X_{k,t}\cap Z_k\neq\emptyset$, and thus the reduction of the weight from $S$ in the $k$th epoch is
\begin{align}\label{eq1-6}
\sum\limits_{t=1}^{T_k}\alpha_{k,t}|\mathcal U(S)\cap X_{k,t}\cap Z_k| \geq \sum_{t=1}^{T_k}\alpha_{k,t}=q_k+\sum_{t=1}^{T_k-1} q_k\varepsilon(1+\varepsilon)^{t-1}=q_k(1+\varepsilon)^{T_k-1}.
\end{align}
By \eqref{eq0001-2}, this amount of reduction is upper bounded by $w(S)\leq w_{\max}$. By the definition of $q_k$ in line \ref{line0-0} of Algorithm \ref{algo1}, we have $q_k>\frac{\varepsilon w_{\min}}{n}$ (note that any set $S'$ which is not picked before the $k$th epoch has residual weight $w'(S')>\varepsilon w(S')\geq \varepsilon w_{\min}$ and covers at most $n$ elements). Combining these bounds with \eqref{eq1-6}, we have
$T_k\leq \log_{1+\varepsilon}\frac{nw_{\max}}{\varepsilon w_{\min}}+1$.
\end{proof}

To bound $\frac{w_{\max}}{w_{\min}}$, we modify Algorithm \ref{algo1} as follows.

\begin{algorithm}[H]
\caption{Modified-ParaMinSC-SNC$(\mathcal{G}=(X,\mathcal{S}, w))$}
\textbf{Input}$\colon$ A set system $(X,\mathcal{S},w)$ with $\tau$-SNC property.

\textbf{Output} A collection of sets $\mathcal B$ which covers all elements of $X$

\begin{algorithmic}[1]\label{algo3}
\STATE $\beta\longleftarrow\max_{e\in X}\min_{S\colon S\in \mathcal{N}_{\mathcal{S}}(e)}w(S)$.
\STATE $\mathcal S^{(1)}\leftarrow   \{S\in \mathcal S\colon w(S)\in  [0,\frac{\beta\varepsilon}{n})\}$; $\mathcal S^{(2)}\leftarrow \{S\in \mathcal S\colon w(S)\in [\frac{\beta\varepsilon}{n},n\beta]\}$\label{line0001}
\STATE $X^{(1)}\leftarrow   \mathcal{U}(\mathcal S^{(1)})$; $X^{(2)}\leftarrow X\backslash X^{(1)}$
\FOR{each $e\in X^{(1)}$ in parallel} \label{linee6}
\STATE Select any set $S\in \mathcal N_{\mathcal S^{(1)}}(e)$ and add it into $\mathcal{B}_1$\label{line1117}
\ENDFOR
\STATE $\mathcal{B}_2\leftarrow \mbox{ ParaMinSC-SNC}(\mathcal{G}^\prime=(X^{(2)},\mathcal S^{(2)}, w))$ \label{linee8}
\RETURN $\mathcal B\leftarrow \mathcal{B}_1\cup \mathcal{B}_2$
\end{algorithmic}
\end{algorithm}

\begin{theorem}\label{thm2}
With high probability, Algorithm \ref{algo3} has approximation ratio at most $(1+3\varepsilon)\tau$ in $O(L\log_{1+\varepsilon}\frac{n^3}{\varepsilon^2}+ 4\tau^{3}2^\tau L^2\log n)$ rounds on $O(nm^{\tau})$ machines, where $0< \varepsilon<\frac{1}{2}$.
\end{theorem}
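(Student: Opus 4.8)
The plan is to assemble Theorem~\ref{thm2} from the three lemmas already established (Lemmas~\ref{lem1105}, \ref{lem11}, \ref{lem2}) after transferring them to the sub-instance $(X^{(2)},\mathcal S^{(2)},w)$ on which Algorithm~\ref{algo1} is invoked in line~\ref{linee8}. First I would record $\beta=\max_{e\in X}\min_{S\in\mathcal N_{\mathcal S}(e)}w(S)$ and note two facts: (a) $\beta\le OPT$, since any feasible cover must contain a set covering the element realizing the outer maximum, and such a set costs at least $\beta$; and (b) every element of $X^{(2)}$ is coverable inside $\mathcal S^{(2)}$, because its cheapest covering set has weight at most $\beta\le n\beta$ and, not lying in $\mathcal S^{(1)}$ (else the element would be in $X^{(1)}$), lies in $\mathcal S^{(2)}$. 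Fact~(b) together with Lemma~\ref{lem1105} applied to the sub-instance shows that $\mathcal B_2$ covers $X^{(2)}$; since $\mathcal B_1$ covers $X^{(1)}$ by construction, $\mathcal B=\mathcal B_1\cup\mathcal B_2$ is feasible. I would also remark that $(X^{(2)},\mathcal S^{(2)})$ inherits $\tau$-SNC from $(X,\mathcal S)$, since the SNC definition already quantifies over all $\mathcal S'\subseteq\mathcal N_{\mathcal S}(e)$, in particular those inside $\mathcal S^{(2)}$, and the guaranteed base group set then lies in $\mathcal S^{(2)}$.

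For the weight bound I would split $w(\mathcal B)=w(\mathcal B_1)+w(\mathcal B_2)$. The cheap part is immediate: $\mathcal B_1$ consists of at most $|X^{(1)}|\le n$ distinct sets, each of weight below $\beta\varepsilon/n$, so $w(\mathcal B_1)<\varepsilon\beta\le\varepsilon\,OPT$. For $\mathcal B_2$ I would run the standard primal-dual estimate on the sub-instance. Writing $\{y(e)\}$ for the final dual variables produced by Algorithm~\ref{algo1}, the left inequality of Lemma~\ref{lem11}$(\romannumeral2)$ gives $w(\mathcal B_2)\le\frac{1}{1-\varepsilon}\sum_{S\in\mathcal B_2}\sum_{e\in S}y(e)=\frac{1}{1-\varepsilon}\sum_{e\in X^{(2)}}y(e)\,|\mathcal N_{\mathcal B_2}(e)|$. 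Here I would use that $y(e)>0$ forces $e\in\bigcup_j F^j$ (dual variables are raised only for still-uncovered layer-$k$ elements, i.e. elements of $F^k$), so Lemma~\ref{lem11}$(\romannumeral1)$ bounds $|\mathcal N_{\mathcal B_2}(e)|\le\tau$ for every such $e$ with high probability, yielding $w(\mathcal B_2)\le\frac{\tau}{1-\varepsilon}\sum_{e\in X^{(2)}}y(e)$.

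The last ingredient, and the step I expect to be the main obstacle, is converting $\sum_{e\in X^{(2)}}y(e)$ into a lower bound on the optimum of the \emph{original} instance through weak LP duality; two points need care. First I would upgrade Lemma~\ref{lem11}$(\romannumeral2)$ from picked sets to \emph{all} sets of $\mathcal S^{(2)}$: if $S$ is never picked, then at the final iteration raising any dual of an element of $S$ the near-tightness test failed, so $w'(S)>\varepsilon w(S)>0$ and hence $\sum_{e\in S}y(e)<w(S)$; this gives sub-instance dual feasibility everywhere and, in particular, the cap $y(e)\le\beta$ for each $e\in X^{(2)}$, whose cheapest covering set lies in $\mathcal S^{(2)}$ with weight at most $\beta$. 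Second I would extend $y$ by $0$ on $X^{(1)}$ and verify feasibility of the original dual \eqref{1-3}: sets of $\mathcal S^{(2)}$ are handled by the previous point, sets of $\mathcal S^{(1)}$ carry dual sum $0$, and an excluded heavy set $S$ with $w(S)>n\beta$ satisfies $\sum_{e\in S}y(e)\le n\beta<w(S)$ by the cap. Weak duality then yields $\sum_{e\in X^{(2)}}y(e)\le OPT$. Combining the three bounds gives $w(\mathcal B)\le(\varepsilon+\tfrac{1}{1-\varepsilon})\tau\,OPT\le(1+3\varepsilon)\tau\,OPT$, using the elementary inequality $\tfrac{1}{1-\varepsilon}\le 1+2\varepsilon$ for $0<\varepsilon<\tfrac12$.

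Finally, for the round and machine counts I would bound the two phases separately. In the sub-instance all weights lie in $[\beta\varepsilon/n,\,n\beta]$, so $w_{\max}/w_{\min}\le n^2/\varepsilon$ and Lemma~\ref{lem2} makes each of the $L$ epochs cost $O(\log_{1+\varepsilon}\frac{n^3}{\varepsilon^2})$ rounds, i.e. $O(L\log_{1+\varepsilon}\frac{n^3}{\varepsilon^2})$ for the forward phase. The deletion phase runs its outer loop $L$ times; within each iteration the sequential maximal-independent-set constructions over the relevant layers cost $O(L\log n)$ rounds and the randomized while loop runs $4\tau^{3}2^\tau\log n$ rounds, so the total over the $L$ outer iterations is dominated by $O(4\tau^{3}2^\tau L^2\log n)$. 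Adding the two phases yields the stated round bound, and the machine count $O(nm^\tau)$ follows from Lemma~\ref{lem1} together with the parallel search, over each element's at most $\binom{m}{\tau}=O(m^\tau)$ candidate subcollections, for a base group set.
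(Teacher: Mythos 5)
Your proposal is correct and follows the paper's proof in all essentials: the same split $w(\mathcal B)=w(\mathcal B_1)+w(\mathcal B_2)$ with $w(\mathcal B_1)\le\varepsilon\beta\le\varepsilon\,w(\mathcal B^*)$, the same primal--dual accounting via Lemma \ref{lem11} giving $w(\mathcal B_2)\le\frac{\tau}{1-\varepsilon}\sum_e y(e)$, and the same round/machine counting via Lemma \ref{lem2} and the bound $w_{\max}/w_{\min}\le n^2/\varepsilon$. The one place you diverge is the weak-duality step: the paper applies weak duality to the LP of the sub-instance $(X^{(2)},\mathcal S^{(2)},w)$ and then relates the two optima through \eqref{eq0003-01}, observing that $\mathcal B^*\cap\mathcal S^{(2)}$ is feasible for the sub-instance (which in turn rests on the opening remark that an optimal solution uses no set of weight exceeding $n\beta$); you instead extend $y$ by zero to $X^{(1)}$, verify feasibility of the \emph{original} dual \eqref{1-3} for all sets --- including the discarded heavy ones, via the cap $y(e)\le\beta$ --- and invoke weak duality once against the original instance. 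Both routes are valid and of comparable length. Your version has the merit of making explicit something the paper leaves implicit: weak duality needs $\sum_{e\in S}y(e)\le w(S)$ for \emph{every} $S$ in the relevant collection, whereas Lemma \ref{lem11}$(\romannumeral2)$ only certifies this for picked sets; your observation that an unpicked set retains $w'(S)>\varepsilon w(S)>0$ at termination closes that gap cleanly. Your explicit remarks that $y(e)>0$ forces $e\in\bigcup_j F^j$ (needed to invoke Lemma \ref{lem11}$(\romannumeral1)$ in the double-counting step of \eqref{eq1117}) and that $(X^{(2)},\mathcal S^{(2)})$ inherits the $\tau$-SNC property are likewise details the paper takes for granted.
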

\begin{proof}
Notice that if we take for each element $e$ the set in $\mathcal N_{\mathcal S}(e)$ with the minimum weight, then the union of these sets form a set cover, whose total weight is no more than $n\beta$. So, no set with weight larger than $n\beta$ can be used in an optimal solution. This is why it is sufficient for the algorithm to consider those sets with weight $\leq n\beta$.

Let $\mathcal B^*$ be an optimal solution of $\mathcal{G}=(X,\mathcal{S}, w)$, and $\mathcal B^*_2$ an optimal solution of $\mathcal{G}^\prime=(X^{(2)},\mathcal S^{(2)}, w)$. Notice that $\mathcal B^*\cap \mathcal S^{(2)}$ is a feasible solution to $\mathcal{G}^\prime$, so
\begin{align}\label{eq0003-01}
w(\mathcal B^*_2)\leq w( \mathcal B^*\cap \mathcal S^{(2)})\leq w( \mathcal B^*).
\end{align}
By line \ref{line0001} and line \ref{line1117} of Algorithm \ref{algo3},
\begin{align}\label{eq0003-02}
w(\mathcal{B}_1)\leq |X^{(1)}|\frac{\beta\varepsilon}{n}\leq \beta\varepsilon \leq w(\mathcal B^*)\varepsilon.
\end{align}
By Lemma \ref{lem11}, it can be estimated that
\begin{align}\label{eq1117}
w(\mathcal{B}_2) & =\sum\limits_{S\in \mathcal B_2}w(S)\leq \frac{\sum\limits_{S\in \mathcal B_2}\sum\limits_{e\in S}y(e)}{1-\varepsilon}=\frac{\sum\limits_{e\in S}y(e)\cdot |\{S\in \mathcal B_2\colon e\in S\}|}{1-\varepsilon} \nonumber\\
& \leq \frac{\tau}{1-\varepsilon}\sum\limits_{e\in E}y(e)\leq\frac{\tau}{1-\varepsilon} w(\mathcal B^*_2)\leq \frac{\tau}{1-\varepsilon}w(\mathcal B^*),
\end{align}
where the first inequality uses property $(\romannumeral2)$, the second inequality uses property $(\romannumeral1)$, the third inequality uses the weak duality theory (that is, the objective value of the dual variables $\{y(e)\}_{e\in X^{(2)}}$ is a lower bound for the objective value of the primal solution), and the last inequality uses \eqref{eq0003-01}.

Combining inequalities \eqref{eq0003-02} and \eqref{eq1117}, for $0<\varepsilon< 1/2$, we have
$$
w(\mathcal B)\leq w(\mathcal B^*)\varepsilon+ \frac{\tau}{1-\varepsilon} w(\mathcal B^*)=\frac{\varepsilon+\tau-\varepsilon^2}{1-\varepsilon}\cdot w(\mathcal B^*) \leq \tau(1+3\varepsilon)w(\mathcal B^*).
$$
the desired approximation ratio is obtained.

Next, we consider the number of rounds. For the reduced instance $\mathcal G'$, its heaviest set has weight at most $n\beta$, its lightest set has weight at least $\beta\varepsilon/n$. So, by Lemma \ref{lem2}, the forward phase for the reduced instance $\mathcal G'$ runs in $O(L\log_{1+\varepsilon}\frac{n^3}{\varepsilon^2})$ rounds on $(m+n)$ machines, where $m$ is the number of sets in $\mathcal S$. In the deletion phase, if we use the parallel algorithm in \cite{Ghaffari}, then finding a maximal independent set needs $O(\log n)$ rounds on $O(n)$ machines. Hence in the $k$th epoch of Algorithm \ref{algo2}, the steps from line \ref{line1213} to line \ref{label0102-1} run in $O((L-k+1)\log n)$ rounds on $O(n)$ machines; line \ref{line1213-1} needs $O(1)$ rounds on $O(nm^{\tau})$ machines \cite{Agarwal.}; and the while loop runs in $4\tau^{3}2^\tau\log n$ rounds on $(m+n)$ machines. Thus the total number of rounds for the deletion phase is $4\tau^{3}2^\tau L^2\log n$ on $O(nm^{\tau})$ machines. Adding together, calling function ParaMinSC-SNC$(\mathcal G')$ in line \ref{linee8} of Algorithm \ref{algo3} runs in $O(L\log_{1+\varepsilon}\frac{n^3}{\varepsilon^2}+ 4\tau^{3}2^\tau L^2\log n)$ rounds on $O(nm^{\tau})$ machines. All the other steps in Algorithm \ref{algo3} can be done parallelly in $O(1)$ rounds on $n$ machines. The adaptive complexity of the algorithm follows.
\end{proof}

\section{Conclusion and Discussion}\label{sec5}

This paper gave a parallel algorithm for the minimum set cover problem with $\tau$-SNC property, achieving approximation ratio $\tau(1+3\varepsilon)$ in $O(L\log_{1+\varepsilon}\frac{n^3}{\varepsilon^2}+4\tau^{3}2^\tau L^2\log n)$ rounds on $O(nm^{\tau})$ machines.

Our result affirmatively answers two of the three open questions proposed in \cite{Agarwal.}. Question $Q_{\uppercase\expandafter{\romannumeral3}}$ still remains, and new ideas are needed in order to answer it.

\section*{Acknowledgements}
This research is supported by NSFC (11901533, U20A2068, 11771013), and ZJNSFC (LD19A010001).


\end{document}